\newcommand{\PreserveBackslash}[1]{\let\temp=\\#1\let\\=\temp}
\newcolumntype{C}[1]{>{\PreserveBackslash\centering}p{#1}}
\newtheorem{lemma}{Lemma}
\newtheorem{proposition}{\it Proposition}
\newenvironment{proof}{\paragraph{Proof:}}{\hfill$\square$}
\begin{document}

\bibliographystyle{IEEEtran}

\title{Optimization on Multi-User Physical Layer Security of Intelligent Reflecting Surface-Aided VLC}

\author{Shiyuan Sun, Fang Yang, \emph{Senior Member, IEEE},	Jian Song, \emph{Fellow, IEEE}, and Zhu Han, \emph{Fellow, IEEE}	
\vspace{-0.3cm}

\thanks{
	This work was supported by the National Natural Science Foundation of China (61871255) and the Fok Ying Tung Education Foundation.
	\textit{(Corresponding author: Fang Yang)}
	
	Shiyuan Sun, Fang Yang, and Jian Song are with the Department of Electronic Engineering, Beijing National Research Center for Information Science and Technology, Tsinghua University, Beijing 100084, China, and also with the Key Laboratory of Digital TV System of Guangdong Province and Shenzhen City, Research Institute of Tsinghua University in Shenzhen, Shenzhen 518057, China 
	(e-mail: sunsy20@mails.tsinghua.edu.cn; fangyang@tsinghua.edu.cn; jsong@tsinghua.edu.cn).
	
	Zhu Han is with the Department of Electrical and Computer Engineering in the University of Houston, Houston, TX 77004 USA, and also with the Department of Computer Science and Engineering, Kyung Hee University, Seoul, South Korea, 446-701 
	(e-mail: zhan2@uh.edu).
}
}

\maketitle
\begin{abstract}
This letter investigates physical layer security in intelligent reflecting surface (IRS)-aided visible light communication (VLC).
Under the point source assumption, we first elaborate the system model in the scenario with multiple legitimate users and one eavesdropper, where the secrecy rate maximization problem is transformed into an assignment problem by objective function approximation.
Then, an iterative Kuhn-Munkres algorithm is proposed to optimize the transformed problem, and its computational complexity is in the second-order form of the numbers of IRS units and transmitters.
Moreover, numerical simulations are carried out to verify the approximation performance and the VLC secrecy rate improvement by IRS.
\end{abstract}

\begin{IEEEkeywords}
Visible light communication,  intelligent reflecting surface (IRS), secrecy rate maximization, assignment problem.
\end{IEEEkeywords}

\IEEEpeerreviewmaketitle

\vspace{-0.3cm}
\section{Introduction}
\vspace{-0.1cm}
\label{Sec:Intro}
As an indispensable component of future wireless communication technologies, visible light communication (VLC) has long been concerned by authoritative institutions such as the VLC Consortium (VLCC) and hOME Gigabit Access project (OMEGA)~\cite{pathak2015visible}, and its scientific research and industrialization progress are deepening.
Generally, VLC shows outstanding advantages such as abundant frequency bandwidth, license-free merit, and low equipment cost~\cite{pathak2015visible}.
Nevertheless, the physical layer security of VLC systems is of paramount importance to be investigated.
Traditional VLC physical security is typically guaranteed by effective beamforming schemes and/or jamming techniques at transmitters~\cite{9317699}, and the upper and lower bounds of secrecy capacity have been derived in~\cite{mostafa2015physical}.

Fortunately, the emerging intelligent reflecting surface (IRS) technology provides a brand new perspective that the system security can be enhanced by actively re-directing the reflected signals.
The physical basis of IRS lies in the manipulation of electromagnetic waves, and mainstream hardware architectures and channel models in the visible light range have been investigated in~\cite{shirmanesh2020electro,najafi2019intelligent,2021intelligent,abdelhady2020visible}.
As a sequel, there is a growing body of literature that studies the performance of IRS-aided VLC systems, including the outage probability reduction in mobile free-space optical communication~\cite{wang2020performance}, the blockage problem mitigation in single-transmitter single-user~\cite{9543660} and multi-transmitter multi-user scenarios~\cite{sun_CL}, etc.
In the area of VLC physical layer security, a secure IRS-aided VLC model is established in~\cite{qian2021secure}, endeavoring to maximize the secrecy rate with the facilitation of IRS. 
A particle swarm optimization (PSO) algorithm is proposed to configure the orientations of IRS elements, while this research focuses on the single user and single transmitter scenarios.

This letter establishes an indoor multi-user IRS-aided VLC system, wherein one of the legitimate users is eavesdropped by an illegitimate user.
Based on the estimations of signal-to-interference-plus-noise ratio (SINR) level, a second-order polynomial-time algorithm is proposed to achieve the global optimization of the approximation problem, which finely approaches the original problem according to numerical results.
More insights are provided in theoretical analyses as well as the simulation part in the remainder of the letter.


\textit{Notation:}  
symbols $a$ ($A$), $\boldsymbol{a}$, and $\boldsymbol{A}$ represent the scalars, vectors, and matrices, respectively. 
Then, calligraphic letters $\mathcal{A}$ denote the defined index sets and $\mathbb{R}_+$ denote the real-valued and nonnegative number set.
Moreover, $\left(\cdot\right)^T$, $\mathbb{I}\left(\cdot\right)$, $\left\lfloor \cdot \right\rfloor$, and $\left\lceil \cdot \right\rceil$ denote the transpose operator, the indicator function, the floor function, and the ceil function, respectively.

\vspace{-0.4cm}
\section{System Model}
\vspace{-0.1cm}
\label{Sec:Model}
Consider a multi-user VLC system illustrated in Fig.~\ref{Fig:Diagram}, where Eve attempts to eavesdrop on a certain user and IRS is deployed to enhance system security. 
The light-of-sight (LoS) and Non-LoS (NLoS) channel models and received signals of both legitimate users and eavesdropper are discussed in the sequel of this section.

\vspace{-0.4cm}
\subsection{Channel Gain in Point Source Cases}
\vspace{-0.1cm}
\label{Subsec:Channel_gain}
\textit{1) 
LoS path:} 
In VLC, the LoS channel gain between the $k$-th user and the $l$-th light-emitting diode (LED) generally follows the Lambertian model as~\cite{pathak2015visible}
\begin{equation}
	\label{Eq:LoS_gain}
	\setlength\abovedisplayskip{3pt}
	h_{k,l}^{(1)} = \frac{A(m+1)}{2\pi d_{k,l}^2}\cos^m(\theta)g_{of}\cos(\phi)f(\phi),
	\setlength\belowdisplayskip{3pt}
\end{equation}
where $\theta$ is the angle of irradiance, $\phi$ is the angle of incidence, and $f(\phi)$ is formulated by refractive index $u$ and the semi-angle of the field-of-view (FoV) $\Phi$~\cite{pathak2015visible}.
Then, the parameters $A$, $m$, $g_{of}$, and $d_{k,l}$ represent the photodetector (PD) area, the Lambertian index, the optical filter gain, and the distance between the $l$-th transmitter and the $k$-th IRS unit, respectively.

\textit{2) 
NLoS path:} Given the negligible intensity level, the diffuse reflected light in IRS-aided VLC systems can generally be ignored~\cite{najafi2019intelligent,2021intelligent,abdelhady2020visible,wang2020performance}.
As for specularly reflected paths, some unique properties of IRS in the visible light range are listed as
\begin{itemize}
	\item The imaging method in geometric optics reveals that the reflected path can be regarded equivalently as emitted from the imaging transmitter~\cite{najafi2019intelligent,2021intelligent}, and therefore the direction of NLoS reflected path in IRS-aided VLC systems is easy to be controlled.
	
	\item Considering the nanoscale wavelength of the visible light, the near-field condition is guaranteed in the IRS-aided VLC systems according to~\cite{tang2020wireless}. 
	Consequently, the upper bound of irradiance level at PD follows an ``additive'' model under the point source assumption~\cite{abdelhady2020visible}, instead of the ``multiplicative'' one in the far-field case~\cite{tang2020wireless}. 
	
	\item In point source cases, one tiny IRS unit can only serve an individual user at a time since the propagation directions of the reflected paths rely on the specular reflection law strictly~\cite{sun_CL}, and the location difference between transmitters will lead to misalignment at the target PD.
\end{itemize}
 
Based on the aforementioned discussions, an upper bound of the irradiance performance is derived under the point source assumption as~\cite{abdelhady2020visible}
\begin{equation}
	\label{Eq:NLoS_gain}
	\setlength\abovedisplayskip{3pt}
	h_{k,n,l}^{(2)} = \delta\frac{A(m+1)}{2\pi \left(d_{n,l}+d_{k,n}\right)^2}\text{cos}^m(\theta)g_{of}\text{cos}(\phi)f(\phi),
	\setlength\belowdisplayskip{3pt}
\end{equation}
where $\delta$ is the reflectance of IRS unit, while $d_{n,l}$ and $d_{k,n}$ are the distances between the $l$-th LED and the $n$-th IRS unit and the $n$-th unit and the $k$-th user, respectively.

\begin{figure}[t]
	\centering
		\includegraphics[width=0.6\textwidth]{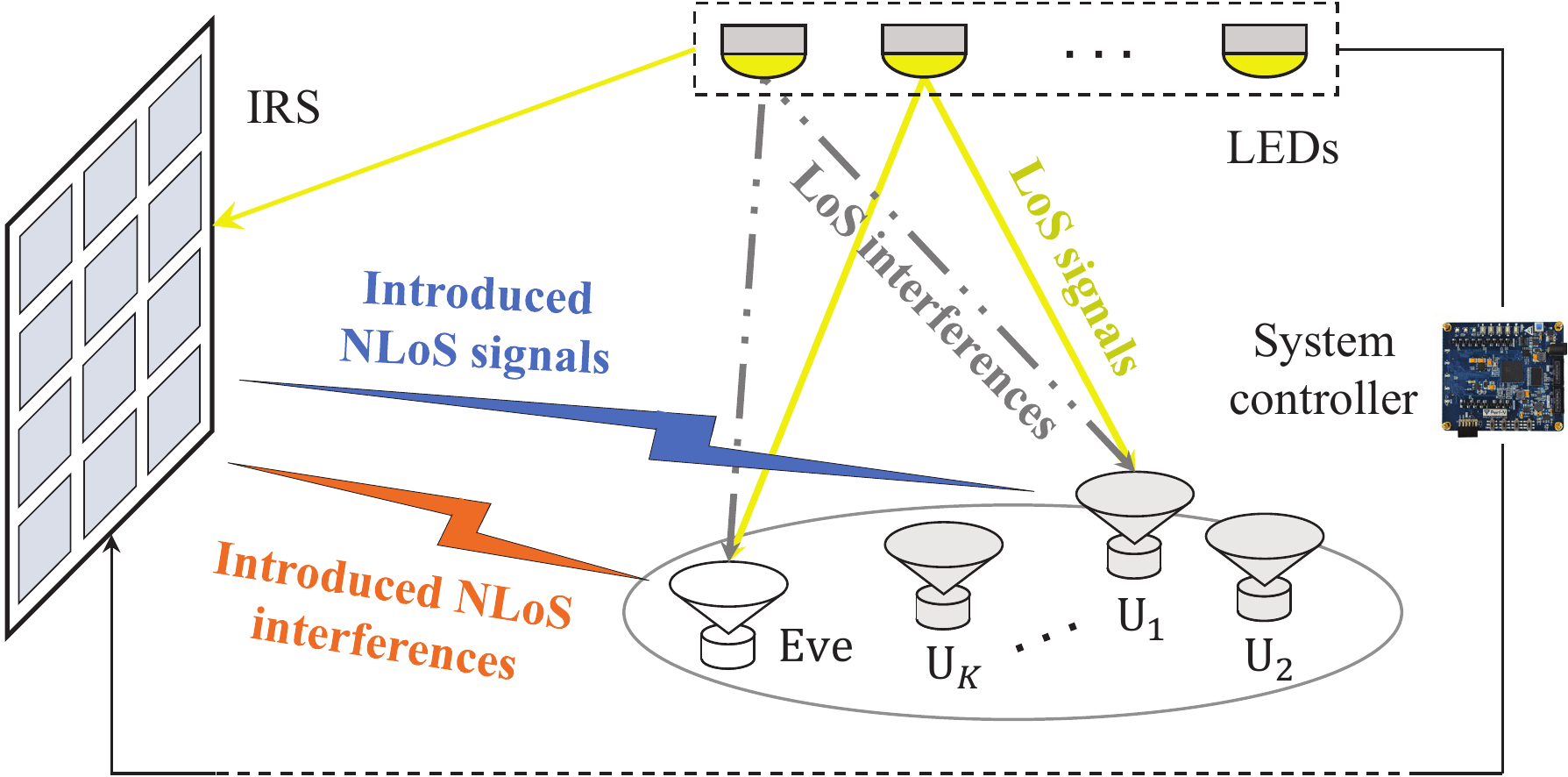}
	\caption{The system model of the secure IRS-aided VLC.}
	\label{Fig:Diagram}
	\vspace{-0.5cm}
\end{figure} 

\vspace{-0.35cm}
\subsection{Received Signals of Legitimate User and Eavesdropper}
\vspace{-0.1cm}
\label{Subsec:Received_signal}
When an individual user $k$ is served by the $l$-th transmitter, the received signal $\widehat{y}_{k,l}$ can be divided into three parts, namely the LoS component $\widehat{y}_{k,l}^{(1)}$, the NLoS component $\widehat{y}_{k,l}^{(2)}$, and the additive white Gaussian noise (AWGN) $z_k$.
Then, the total signal can be formulated as
\begin{equation}
	\label{Eq:Total_signal}
	\setlength\abovedisplayskip{3pt}
	\widehat{y}_{k,l} = \widehat{y}_{k,l}^{(1)} + \widehat{y}_{k,l}^{(2)} + z_k,
	\setlength\belowdisplayskip{3pt}
\end{equation}
where $z_k\sim\mathcal{N}(0,\sigma_k^2)$ with $\sigma_k^2$ the variance of the noise power.

\textit{1) Legitimate users:}
Suppose the transmission symbols on LEDs are represented by the vector $\boldsymbol{x}^T=\left[x_1,x_2,...,x_L\right]$, the LoS received signals of the $k$-th user are comprised of the intended information and the inter-user interferences as
\begin{equation}
	\setlength\abovedisplayskip{3pt}
	\label{Eq:LoS_intend}
	\widehat{y}_{k,l}^{(1)} = \rho_{k} h_{k,l}^{(1)}P_l x_{l} + \rho_{k} \sum_{i=1,i\neq l}^{L}h_{k,i}^{(1)}P_i x_{i},
	\setlength\belowdisplayskip{3pt}
\end{equation}
where $\rho_k$ and $P_l$ denote the PD responsivity and the emission power, respectively.
Without loss of generality, $x_l$ is independent with each other and with the expectation of 1.

Then, a binary matrix $\boldsymbol{G}=\left[\boldsymbol{g}_{1},\ \boldsymbol{g}_{2},\cdots,\ \boldsymbol{g}_{L} \right]_{N\times L}$ is defined to describe the association relationship between IRS units and LEDs, after which each unit can reconfigure itself based on a reverse lookup table~\cite{sun_CL}.
More specifically, the discrete element $g_{n,l}=1$ and $g_{n,l}=0$ indicate the cases that the $n$-th unit is and is not assigned to the $l$-th transmitter, respectively.
Therefore, the NLoS signal of the $k$-th legitimate user equals the aggregate gains of reflected paths that are related to IRS units of the $l$-th LED, which is given by
\begin{equation}
	\setlength\abovedisplayskip{3pt}
	\label{Eq:NLoS_intend}
	\widehat{y}_{k,l}^{(2)} = \rho_k \boldsymbol{h}_{k,l}^{(2)T} \boldsymbol{g}_l P_{l} x_{l},
	\setlength\belowdisplayskip{3pt}
\end{equation}
where a defined vector $\boldsymbol{h}_{k,l}^{(2)}=[h_{k,1,l},h_{k,2,l},...,h_{k,N,l}]^T \in \mathbb{R}_+^{N\times 1}$ is introduced to simplify the following discussions.

\textit{2) Eavesdropper:} 
Different from the traditional radio frequency (RF) communications, the intensity modulation and direct detection (IM/DD) scheme in VLC is based on the received light intensity, and the lack of phase information makes it impossible to eliminate the eavesdropper signal by passive beamforming techniques. 
Nevertheless, we attempt to degrade the SINR of the eavesdropper by purposely introducing an unintended interference, which can be implemented since the specularly reflected light is easy to be re-directed.
Compared to the legitimate users, the eavesdropper is considered as a special element with a vector $\boldsymbol{g}_0\in \left\{0,1\right\}^{N\times 1}$ recording its IRS assignment situation.
Once $g_{n,0}=1$ ensures, the $n$-th unit will adjust so that the unintended signal emitted from the LED closest to the eavesdropper can be reflected and then propagate to the eavesdropper.
As a consequence, the NLoS signal acts as interference and can be expressed as
\begin{equation}
	\setlength\abovedisplayskip{3pt}
	\label{Eq:NLoS_eve}
	\widehat{y}_{E,l}^{(2)} = \rho_E \boldsymbol{h}_{E,l_c}^{(2)T} \boldsymbol{g}_0 P_{l_c} x_{l_c},
	\setlength\belowdisplayskip{3pt}
\end{equation}
where $l_c$ is the index of the complementary transmitter to the $l$-th LED, i.e., the $l_c$-th LED carries the unintended signal and it has the shortest distance to the eavesdropper.

\vspace{-0.4cm}
\section{Optimization of Overall Secrecy Rate}
\vspace{-0.1cm}
\label{Sec:Proposed}
\subsection{Problem Formulation}
\vspace{-0.1cm}
\label{Subsec:Formulate}
Considering the constraints of the emission power limitation and real-value and nonnegative amplitude, the capacity-achieving input distribution in VLC is discrete instead of in a typical Shannon capacity form.
Nevertheless, a lower bound of the VLC dimmable channel capacity is proposed~\cite{wang2013tight}, through which the approximate capacity of the $k$-th legitimate user is given by
\begin{equation}
	\setlength\abovedisplayskip{3pt}
	\label{Eq:Rate_comp_intend}
	C_{k,l} = \frac{1}{2}W\log_2\left(1+\frac{e}{2\pi}\gamma_{k,l}\right),
	\setlength\belowdisplayskip{3pt}
\end{equation}
where $e$ and $W$ represent the value of the base of natural logarithms and the bandwidth, respectively.
The individual SINR of the $k$-th user is denoted as
\begin{equation}
	\setlength\abovedisplayskip{3pt}
	\label{Eq:SINR_intend}
	\gamma_{k,l} = \frac{\rho_k^2\left\{h_{k,l}^{(1)}+\boldsymbol{h}_{k,l}^{(2)T} \boldsymbol{g}_l\right\}^2P_l^2}{I_{k,l}},
	\setlength\belowdisplayskip{3pt}
\end{equation}
where the LoS interference plus noise can be expressed as
\begin{equation}
	\label{Eq:Noise_intend}
	\setlength\abovedisplayskip{3pt}
	I_{k,l} = \sigma_k^2+\rho_k^2\sum_{i=1,i\neq l}^{L}\left\{h_{k,i}^{(1)}P_{i}\right\}^2.
	\setlength\belowdisplayskip{3pt}
\end{equation}

This letter endeavors to maximize the expectation of multi-user secrecy rate for IRS-aided VLC systems.
To this end, the probability that the $k$-th user served by the $l$-th transmitter is denoted by a constant $f_{l,k}$ within the coherent time, which satisfies the equation $\sum_{k=1}^{K}f_{l,k} = 1$ according to the properties of the probability function.
Consequently, the upper bound of the eavesdropper capacity is given by
\vspace{-0.1cm}
\begin{align}
	\label{Eq:Rate_eve}
	C_E\! =\! \sum_{l=1}^L \frac{f_{l,k^*}W}{2}\log_2\left(\!1\!+\!\frac{e}{2\pi}\frac{\rho_{E}^2h_{{E},l}^{(1)2}P_l^2}{I_{{E},l}+\rho_{E}^2P_{l_c}^2\left\{\boldsymbol{h}_{{E},l_c}^{(2)T}\boldsymbol{g}_0\right\}^2}\right)\!,\notag\\
\end{align}
\vspace{-0.5cm}

\noindent
where $k^*$ is the user concerned by the eavesdropper.
To sum up, the overall secrecy rate is derived as~\cite{mostafa2015physical}
\begin{equation}
	\label{Eq:Secure_rate}
	\setlength\abovedisplayskip{3pt}
	C_S = \sum_{k=1}^K\sum_{l=1}^L f_{l,k}C_{k,l} - C_E,
	\setlength\belowdisplayskip{3pt}
\end{equation}
and the secrecy rate maximization problem is formulated as 
\vspace{-0.1cm}
\begin{align}
	\label{Pro:Ori_problem}
	\textbf{P}:\ \max\limits_{\widetilde{\boldsymbol{G}}}&\quad C_S\left(\widetilde{\boldsymbol{G}}\right) \\
	\label{Con:Discrete}
	\text{s.t.}\ 
	& \widetilde{g}_{n,l}\in\{0,1\}, \quad\  \forall n \in \mathcal{N}, l \in \mathcal{L}\cup\left\{0\right\},\\
	\label{Con:G_rowSum}
	& 
	\sum_{l=0}^{L}\widetilde{g}_{n,l}=1, \quad\  \forall n \in \mathcal{N},\\
	\label{Con:G_fair}
	&\sum_{n=1}^{N}\widetilde{g}_{n,l}\geq \left\lfloor \frac{N}{L+1} \right\rfloor,  \quad\  \forall l \in \mathcal{L}\cup\left\{0\right\},
\end{align}
\vspace{-0.2cm}

\noindent
where $\widetilde{\boldsymbol{G}}=\left[\boldsymbol{g}_0, \boldsymbol{G}\right]$ denotes an aggregate matrix, $\mathcal{L}$, $\mathcal{N}$, and $\mathcal{K}$ indicate the index sets of the transmitters, the IRS units, and legitimate users, respectively.
Then, the constraints in~(\ref{Con:Discrete}) and~(\ref{Con:G_rowSum}) come from the definitions of $\boldsymbol{G}$ and $\boldsymbol{g}_0$, and the constraint in~(\ref{Con:G_fair}) aims to guarantee the fairness of IRS configuration.

\vspace{-0.4cm}
\subsection{Proposed Algorithm to Maximize the Secrecy Rate}
\vspace{-0.1cm}
\label{Subsec:Proposed}
Mathematically, the problem \textbf{P} is an integer programming problem with the complexity of $\mathcal{O}(\left(L+1\right)^N)$, which is non-deterministic polynomial-time (NP) hard.
To reduce the complexity, a relaxing greedy algorithm is proposed to maximize the achievable sum rate of intended users~\cite{sun_CL}, wherein one of the subproblems is proved to be asymptotically convex.
However, such an optimization algorithm cannot be directly used in \textbf{P} since the term of eavesdropper rate destroys the structure of objective function.
Furthermore, considering the numbers of IRS units can be extremely large, an algorithm with controllable complexity is desirable in dealing with the time-variant environment.

\begin{figure}[t]
	\centering
	\subfigure[KM process]{
		\includegraphics[width=0.45\textwidth]{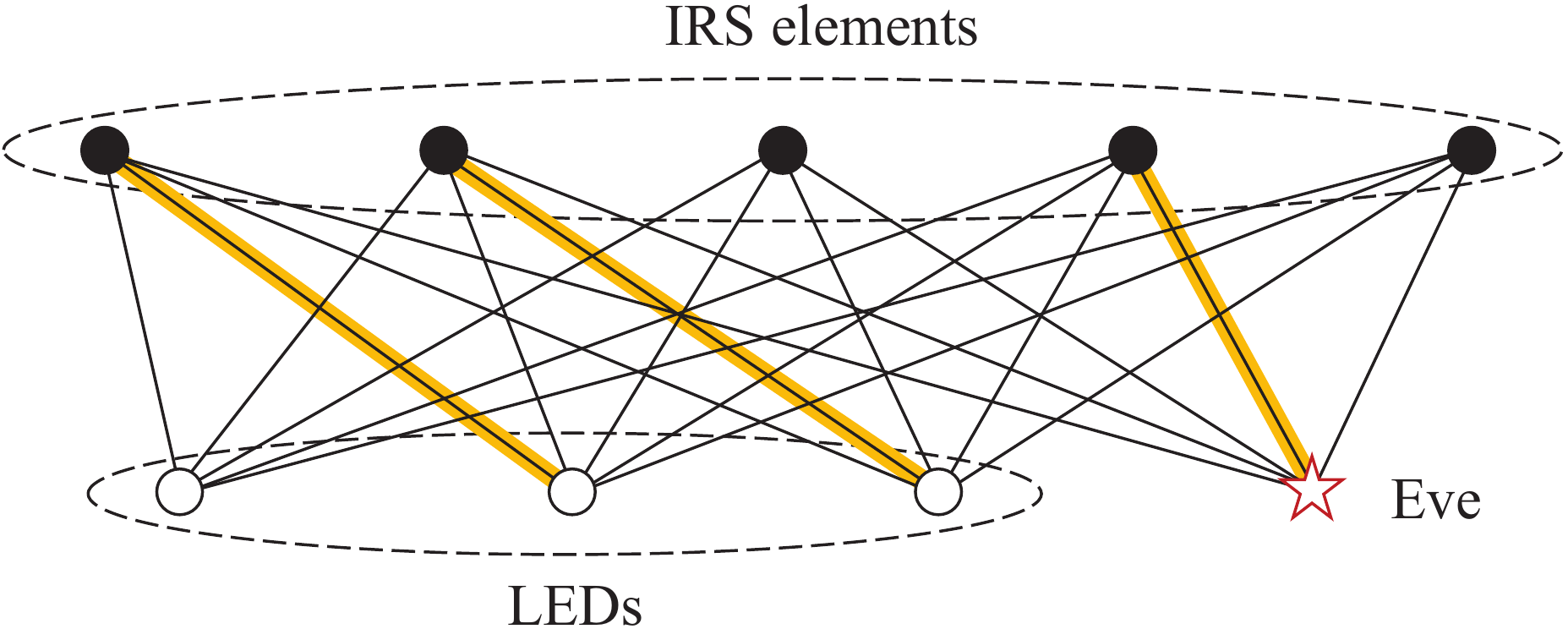}
	}
	\quad
	\subfigure[Delete the selected vertices]{
		\includegraphics[width=0.45\textwidth]{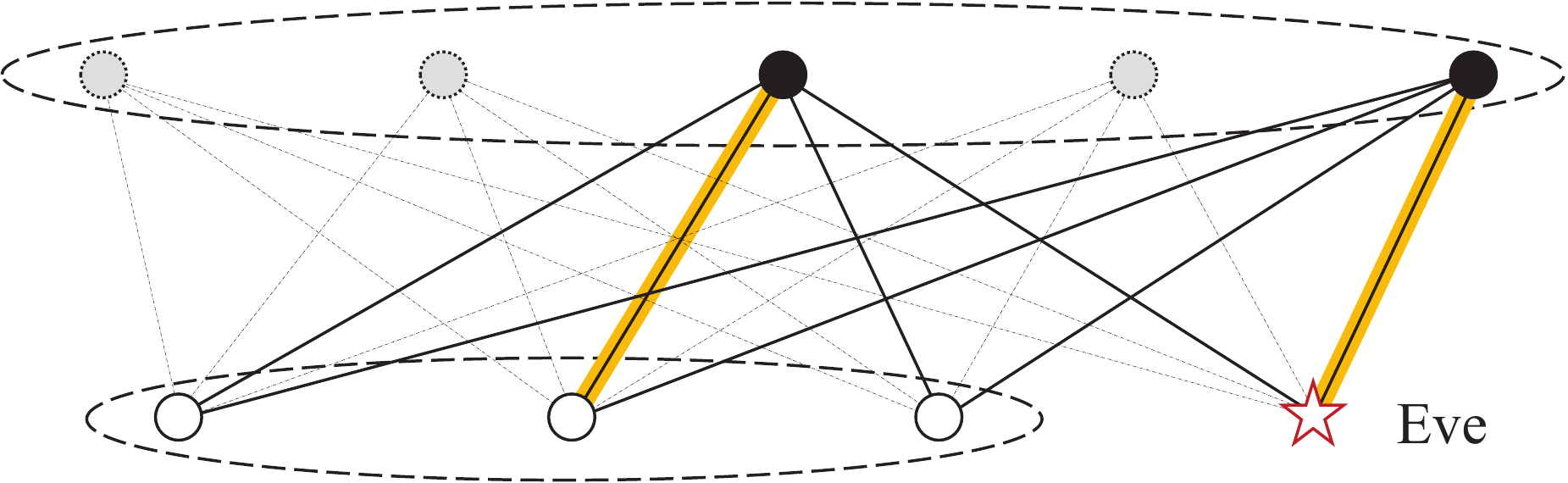}
	}
	\caption{The sketch map of the proposed iterative KM algorithm.} 
	\label{Fig:iterKM}
	\vspace{-0.5cm}
\end{figure} 

Based on the characteristics of the objective function and constraints, \textbf{P} can be reanalyzed from a discrete perspective, i.e., how to assign the $n$-th IRS unit to one of the $L+1$ targets (transmitters and eavesdropper) on the premise of fairness.
The limitation is that variables are coupled together in $C_S$, which leads to the difficulty to analyse the variation of secrecy rate.
Nevertheless, the capacity formulas in~(\ref{Eq:Rate_comp_intend}) and~(\ref{Eq:Rate_eve}) are exactly in the logarithmic form, which hints that they can be transformed into the form of linear combination by proper approximation.

\begin{lemma}
	\label{Le:Appro}
	The function $\eta\left(x^*\right)\log\left(x\right) + \xi\left(x^*\right)$ is a tight lower bound of $\log\left(1+x\right)$, where $\eta$ and $\xi$ are coefficients with respect to the tangent point $x^*$~\cite{papandriopoulos2008optimal}.
\end{lemma}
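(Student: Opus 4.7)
The plan is to prove the inequality by a change of variable that turns $\log(1+x)$ into a convex function and then invoke the standard fact that a convex function lies above each of its tangent lines, with equality exactly at the point of tangency.

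First I would substitute $u = \log(x)$ (equivalently $x = e^u$) and define $g(u) = \log(1+e^u)$. A direct differentiation gives $g'(u) = e^u/(1+e^u)$ and $g''(u) = e^u/(1+e^u)^2 > 0$, so $g$ is strictly convex in $u$ on all of $\mathbb{R}$. Consequently, for any reference point $u^* = \log(x^*)$, the tangent-line inequality
\begin{equation}
g(u) \;\geq\; g(u^*) + g'(u^*)\bigl(u-u^*\bigr), \qquad \forall u \in \mathbb{R},
\end{equation}
holds with equality if and only if $u = u^*$.

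Next I would read off the coefficients by matching the tangent line to the form $\eta(x^*)\log(x) + \xi(x^*)$. Comparing coefficients of $u = \log(x)$ forces
\begin{equation}
\eta(x^*) \;=\; g'(u^*) \;=\; \frac{x^*}{1+x^*}, \qquad \xi(x^*) \;=\; \log(1+x^*) \;-\; \frac{x^*}{1+x^*}\log(x^*),
\end{equation}
so both $\eta$ and $\xi$ are fully determined by the tangent point $x^*$. Substituting back into the tangent inequality yields $\log(1+x) \geq \eta(x^*)\log(x) + \xi(x^*)$ for every $x>0$, with equality at $x=x^*$, which is precisely the claimed tight lower bound.

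There is no serious obstacle in this argument; the only non-trivial observation is that while $\log(1+x)$ is concave in $x$, it is convex in $\log(x)$, and it is this latter convexity that produces a lower bound of the prescribed log-linear form. Once this is noticed, the rest reduces to a one-line second-derivative check and the standard subgradient inequality for convex functions, and the tightness at the tangent point is automatic from the construction.
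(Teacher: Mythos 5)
Your proof is correct and coincides with the standard argument behind the cited bound: the paper itself offers no proof of Lemma~\ref{Le:Appro} beyond the reference~\cite{papandriopoulos2008optimal}, and the convexity of $g(u)=\log\left(1+e^u\right)$ in $u=\log\left(x\right)$ combined with the tangent-line (supporting-line) inequality is exactly how that reference establishes the bound, yielding the same coefficients $\eta\left(x^*\right)=x^*/\left(1+x^*\right)$ and $\xi\left(x^*\right)=\log\left(1+x^*\right)-\frac{x^*}{1+x^*}\log\left(x^*\right)$ with equality at $x=x^*$. No gaps.
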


Given the functions in Lemma~\ref{Le:Appro}, the capacity of the legitimate user can be approximated at the given SINR $\gamma_{k,l}$ as
\vspace{-0.4cm}
\begin{align}
	\label{Eq:Rate_appro_intend}
	2C_{k,l}/W &\!\approx\ \! \eta_{k}\log_2\!\left(\frac{e}{2\pi}\frac{\rho_k^2\left\{h_{k,l}^{(1)}+\boldsymbol{h}_{k,l}^{(2)T} \boldsymbol{g}_l\right\}^2P_l^2}{I_{k,l}}\!\right)\! +\! \xi_{k} \notag\\
	&\!=\!  2\eta_{k}\log_2\left(1\!+\!\frac{\boldsymbol{h}_{k,l}^{(2)T} \boldsymbol{g}_l}{h_{k,l}^{(1)}}\right) \!+\! \eta_{k}\log_2\left(\Gamma_{k,l}\right) \!+\! \xi_{k},
\end{align}
\vspace{-0.3cm}

\noindent
where $\eta_k$ and $\xi_k$ are approximation functions of $\gamma_{k,l}$~\cite{papandriopoulos2008optimal} and $\Gamma_{k,l} = e\rho_{k}^2h_{{k},l}^{(1)2}P_l^2/\left(2\pi I_{{k},l}\right)$ denotes the LoS component of individual SINR.
Then, the first logarithmic term can be further approximated by Taylor expansion as
\vspace{-0.2cm}
\begin{align}
	\label{Eq:Rate_appro_intend_simplify}
	&C_{k,l} \approx \notag\\
	&\! W\!\eta_{k}\!\left(\!\frac{\boldsymbol{h}_{k,l}^{(2)T} \boldsymbol{g}_l-\lambda_kh_{k,l}^{(1)}}{h_{k,l}^{(1)}(1+\lambda_k)\ln2}\! +\! \log_2\!\left(1+\lambda_k\!\right)\!\right)\! +\! \frac{W}{2}\! \log_2\!\left(\!2^{\xi_{k}}\Gamma_{k,l}^{\eta_{k}}\!\right)\!,
\end{align}
\vspace{-0.3cm}

\noindent
where $\lambda_k = (\sum_{n=1}^N\sum_{l=1}^Lh_{k,n,l}^{(2)})/(L\sum_{l=1}^Lh_{k,l}^{(1)})$ is the tangent point, and the formula holds finely since the NLoS intensity level is weaker compared to the LoS one.
Similarly, the capacity of the eavesdropper can be rewritten as
\begin{align}
	\label{Eq:Rate_appro_eve}
	&-2C_E/W \notag\\
	&\approx\! -\!\sum_{l=1}^L \!f_{l,k^*}\!\left\{\eta_E\log_2\left(\frac{e\rho_{E}^2h_{{E},l}^{(1)2}P_l^2/\left(2\pi\right)}{I_{{E},l}+\rho_{E}^2P_{l_c}^2\left\{\boldsymbol{h}_{{E},{l_c}}^{(2)T}\boldsymbol{g}_0\right\}^2}\right)\! +\! \xi_E\right\}\notag\\
	&=\! \sum_{l=1}^L\!f_{l,k^*}\!\left\{\!\eta_E\log_2\left(1\!+\!\frac{\left\{\boldsymbol{h}_{{E},{l_c}}^{(2)T}\boldsymbol{g}_0\right\}^2}{I_{{E},l}/\rho_{E}^2/P_{l_c}^2}\right)\!-\!\log_2\left(2^{\xi_E}\Gamma_{E,l}^{\eta_E}\right)\!\right\} \notag\\
	&\overset{(a)}{\approx} \sum_{l=1}^Lf_{l,k^*}\left\{\frac{\eta_E\rho_{E}^2P_{l_c}^2\Delta}{I_{{E},l}\ln2}\boldsymbol{h}_{{E},{l_c}}^{(2)T}\boldsymbol{g}_0 -\log_2\left(2^{\xi_E}\Gamma_{E,l}^{\eta_E}\right)\right\},
\end{align}
\vspace{-0.3cm}

\noindent
where $\left(a\right)$ satisfies due to the first-order Taylor expansion at the point 0 and $\Delta$ is a constant quantity of the same order of magnitude as $\boldsymbol{h}_{{E},{l_c}}^{(2)T}\boldsymbol{g}_0$.
With proper approximations, the overall secrecy rate can be linearly divided into the NLoS components and LoS direct bias as
\begin{equation}
	\label{Eq:Re_construct}
	\setlength\abovedisplayskip{3pt}
	\widehat{C}_S\left(\widetilde{\boldsymbol{G}}\right) = \sum_{n=1}^N\sum_{l=0}^L w_{n,l}\cdot \widetilde{g}_{n,l} + Q,
	\setlength\belowdisplayskip{3pt}
\end{equation}
where the bias term $Q$ results from the constant parts in~(\ref{Eq:Rate_appro_intend}) and~(\ref{Eq:Rate_appro_eve}), and the coefficients $w_{n,l}$ can be formulated as
\vspace{-0.2cm}
\begin{align}
	\label{Eq:Weight}
	w_{n,l} =\ & \mathbb{I}\left(l>0\right)\sum_{k=1}^K\frac{\eta_kWf_{l,k}h_{k,n,l}^{(2)}}{h_{k,l}^{(1)}\ln2} \notag\\
	&+ \mathbb{I}\left(l=0\right)\sum_{i=1}^L\frac{\eta_Ef_{i,k^*}W\rho_{E}^2P_{i_{c}}^2h_{E,n,i_{c}}^{(2)}}{2I_{{E},i}\ln2}\Delta.
\end{align}
\vspace{-0.4cm}

\noindent
where the index of the complementary transmitter to the $i$-th LED is denoted by $i_c$, and $\Delta=\sum_{n=1}^N\sum_{l=1}^Lh_{E,n,l}^{(2)}/L^2$.

\vspace{-0.05cm}
\begin{algorithm}[t]
	\caption{Proposed Iterative KM Algorithm}
	\label{Alg:Solve_P1}
	\hspace*{0.02in} {\bf Input:} $h_{k,l}^{(1)}$, $h_{k,n,l}^{(2)}$, $t \gets 0$.\\
	\hspace*{0.02in} {\bf Output:} $\widetilde{\boldsymbol{G}}$.
	\begin{algorithmic}[1]
		\REPEAT
		\STATE Calculate ($\eta_{k}$, $\xi_{k}$, $\eta_{E}$, $\xi_{E}$) as Lemma~\ref{Le:Appro} and $t \gets 0$;
		\STATE Calculate the rate bias $Q$ according to~(\ref{Eq:Rate_appro_intend}) and~(\ref{Eq:Rate_appro_eve});
		\STATE Generate the weight matrix $W$ according to~(\ref{Eq:Weight});
		\REPEAT
		\STATE Run the KM algorithm;
		\STATE $g_{n,l} \gets 1$ for the selected edges;
		\STATE Delete these vertices and their related edges;
		\STATE $t\gets t+1$
		\UNTIL{$N \leq t\left(L+1\right)$}
		\UNTIL{\textit{Convergence}}
	\end{algorithmic}
\end{algorithm}
\setlength{\textfloatsep}{0.1cm}

Consequently, the secrecy rate maximization process is transformed into an optimal matching search in a bipartite graph, wherein two index sets are $\mathcal{N}$ and $\mathcal{L}\cup\left\{0\right\}$.
The Kuhn-Munkres (KM) algorithm can generally achieve the global optimal solution of such bipartite problems, but the maximum number of matching edges in the KM algorithm is $\min\left(N, L+1\right)$.
To ensure the fairness constraint in~(\ref{Con:G_fair}), we propose an iterative KM algorithm to solve the modified \textbf{P}.
As shown in \textbf{Algorithm}~\ref{Alg:Solve_P1}, the weight matrix $W$ and rate bias term $Q$ are calculated before the assignment.
Then, the KM algorithm is carried out in each loop $t$, after which the indices of selected IRS units and their corresponding edges are deleted as illustrated in Fig.~\ref{Fig:iterKM}.
This process will go on till the number of remaining IRS units is a negative number, and the above steps correspond to one time of assignment.
Considering the IRS configuration result affects the individual SINR conversely, the unit assignment and approximate parameters need to be conducted alternatively until convergence, i.e., the matrix $\widetilde{\boldsymbol{G}}$ remains unchange.
Moreover, the proposed algorithm will inevitably end since there are at most $\left\lceil N/\left(L+1\right) \right\rceil$ KM algorithm calls in each assignment.

\vspace{-0.7cm}
\subsection{Analyses on Optimality and Complexity}
\vspace{-0.2cm}
\label{Subsec:Analyses}
\textit{1) Global optimality analysis:} 
The result of \textbf{Algorithm}~\ref{Alg:Solve_P1} will naturally satisfy the fairness constraint~(\ref{Con:G_fair}) due to the iterative process.
Then, the analysis on the optimality of the proposed algorithm is given as follows.
\begin{lemma}
	\label{Le:KMbest}
	The KM algorithm obtains the optimal matching result of the weighted bipartite graph~\cite{6882839}.
\end{lemma}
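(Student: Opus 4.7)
The plan is to establish this classical result by combining the notion of a feasible vertex labeling with the concept of an equality subgraph, and then arguing that the KM procedure is precisely a mechanism for growing that equality subgraph until it contains a perfect matching. Concretely, I would first set up the bipartite graph $B=(\mathcal{N}\cup(\mathcal{L}\cup\{0\}),E)$ whose edge weights are the coefficients $w_{n,l}$ defined in~(\ref{Eq:Weight}), and define a \emph{feasible labeling} as a pair of real-valued functions $u:\mathcal{N}\to\mathbb{R}$ and $v:\mathcal{L}\cup\{0\}\to\mathbb{R}$ satisfying $u(n)+v(l)\geq w_{n,l}$ for every edge. The \emph{equality subgraph} $B_{u,v}$ then consists of exactly those edges achieving equality in this inequality.

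The key step is a duality-style upper bound: for any perfect matching $M$ in $B$, summation of the labeling constraint over the edges of $M$ yields
\begin{equation}
\sum_{(n,l)\in M} w_{n,l} \ \leq\ \sum_{n\in\mathcal{N}} u(n) + \sum_{l\in\mathcal{L}\cup\{0\}} v(l),
\end{equation}
so any perfect matching whose edges lie entirely within $B_{u,v}$ attains this bound and is therefore weight-maximal. Hence it suffices to show that the KM algorithm terminates with a feasible labeling $(u,v)$ whose equality subgraph admits a perfect matching. I would initialize with the trivial labeling $u(n)=\max_l w_{n,l}$, $v(l)=0$, then, at each iteration, either augment the current matching in $B_{u,v}$ along an alternating path found by Hungarian search, or, when no augmenting path exists, adjust the labels by the minimum slack $\Delta=\min\{u(n)+v(l)-w_{n,l}\}$ over the appropriate cut so as to preserve feasibility while strictly enlarging $B_{u,v}$.

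The main obstacle, and the place where the argument needs the most care, is verifying that the label update preserves feasibility on \emph{all} edges (not only on those crossing the current Hungarian tree) and simultaneously adds at least one new edge to $B_{u,v}$ without destroying any previously matched edge; this is what guarantees monotone progress and hence termination after a finite number of augmentations bounded by $\min(N,L+1)$. Once this invariant is established, the combination of (i) the duality bound above and (ii) the existence of a perfect matching inside $B_{u,v}$ at termination jointly imply optimality, completing the proof. Since the graphs handled inside \textbf{Algorithm}~\ref{Alg:Solve_P1} are balanced after the vertex-deletion step of Fig.~\ref{Fig:iterKM}(b), the perfect matching requirement is automatically fulfilled at every inner call, so the lemma transfers directly to the setting used by our iterative procedure.
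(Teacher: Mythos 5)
The first thing to note is that the paper does not prove this lemma at all: it is stated as a classical fact and delegated wholesale to the citation~\cite{6882839}, and the only argument the paper supplies in this vicinity is the short induction in the Proposition that lifts single-call optimality to the iterative scheme. Your self-contained proof via feasible labelings $u(n)+v(l)\geq w_{n,l}$, the equality subgraph $B_{u,v}$, and the weak-duality bound $\sum_{(n,l)\in M} w_{n,l}\leq \sum_n u(n)+\sum_l v(l)$ is therefore necessarily a different route; it is the standard Kuhn--Munkres/Hungarian argument, and the outline (trivial initialization, augment-or-relabel by the minimum slack $\Delta$, tight edges of the matching and of the Hungarian tree preserved under relabeling, termination after at most $\min(N,L+1)$ augmentations) is the correct classical proof for a \emph{balanced} bipartite graph, where a perfect matching attaining the dual value exists at termination. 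What your route buys is self-containedness; what it costs is exactly the invariant bookkeeping you flag (feasibility preserved on all edges, not just tree edges, under the update $u\mapsto u-\Delta$, $v\mapsto v+\Delta$ on the tree), which the paper avoids by citing.

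There is, however, a genuine error in your final transfer step. You assert that the graphs handled inside \textbf{Algorithm}~\ref{Alg:Solve_P1} ``are balanced after the vertex-deletion step,'' but at inner loop $t$ the two sides have cardinalities $L+1$ and $N-t(L+1)$, which are unequal for all but at most one value of $t$, and the last call may even leave fewer units than targets. Consequently no perfect matching exists in those calls, and your duality bound as written does not close: summing the labeling constraint over a matching that saturates only the smaller side yields $\sum_{(n,l)\in M} w_{n,l}\leq \sum_{n\,\text{matched}} u(n)+\sum_{l\,\text{matched}} v(l)$, and comparing this to the full dual objective requires the labels of \emph{unmatched} vertices to be nonnegative, which your invariants do not guarantee. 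The repair is standard but must be stated: either pad the smaller side with dummy vertices joined by zero-weight edges---legitimate here because $w_{n,l}\geq 0$ by~(\ref{Eq:Weight}), so an optimum of the padded balanced problem restricts to a maximum-weight matching of the original unbalanced one---or run the one-sided variant of KM that additionally maintains $u,v\geq 0$ so unmatched labels can be discarded from the bound. With either fix your argument goes through and, combined with the paper's induction in the Proposition, gives a complete self-contained justification of the lemma in the setting where it is actually invoked.
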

\begin{proposition}
	The proposed iterative KM algorithm will achieve the optimal result of the approximate rate function.
\end{proposition}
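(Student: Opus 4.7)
The plan is to reduce the approximate problem to a classical weighted assignment task and then show that the iterative KM procedure attains its optimum in three stages: linearization, feasibility, and optimality.

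First, I would observe that after the approximations in (\ref{Eq:Rate_appro_intend_simplify}) and (\ref{Eq:Rate_appro_eve}), the surrogate objective $\widehat{C}_S(\widetilde{\boldsymbol{G}})$ in (\ref{Eq:Re_construct}) is linear in the binary variables $\widetilde{g}_{n,l}$, with a constant bias $Q$ that does not depend on $\widetilde{\boldsymbol{G}}$. Subject to the constraints (\ref{Con:Discrete})--(\ref{Con:G_fair}), maximising $\widehat{C}_S$ is therefore equivalent to a weighted bipartite $b$-matching problem between the index set $\mathcal{N}$ of IRS units and the index set $\mathcal{L}\cup\{0\}$ of targets, with edge weights $\{w_{n,l}\}$ and a lower-bound degree requirement of $\lfloor N/(L+1)\rfloor$ at each target.

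Second, I would verify that \textbf{Algorithm}~\ref{Alg:Solve_P1} returns a feasible configuration. By construction each inner KM call selects a matching that saturates every target whenever at least $L+1$ unassigned units remain, and the matched units are deleted from the candidate pool before the next call. Thus every IRS unit is matched in exactly one edge, giving (\ref{Con:Discrete}) and (\ref{Con:G_rowSum}); and each target is saturated in all of the first $\lfloor N/(L+1)\rfloor$ rounds, so (\ref{Con:G_fair}) holds as well.

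Third, I would establish optimality via a peeling-and-exchange argument. By Lemma~\ref{Le:KMbest}, each inner KM call returns the maximum-weight matching of the current residual bipartite graph. Any competing feasible $\widetilde{\boldsymbol{G}}^\circ$ can be decomposed into $\lceil N/(L+1)\rceil$ sub-matchings by repeatedly extracting one incident edge from each occupied target, an operation that is well-defined precisely because (\ref{Con:G_fair}) guarantees every target retains at least $\lfloor N/(L+1)\rfloor$ edges throughout the peeling; the sub-matching weights then sum to $\widehat{C}_S(\widetilde{\boldsymbol{G}}^\circ)-Q$. It therefore suffices to argue that the round-by-round KM maxima dominate every such decomposition, which I would do by an exchange argument: if Algorithm~\ref{Alg:Solve_P1} and a competitor first disagree at round $t$, swapping the round-$t$ edges (and re-routing the displaced edges into later rounds) yields a modified assignment of no-smaller objective, again by Lemma~\ref{Le:KMbest}, and iterating eliminates all disagreements without decreasing the weight.

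The main obstacle is exactly this last exchange step, since each KM call removes units from the pool available to later calls and so a naive induction on rounds breaks down. I would resolve this by appealing to the total unimodularity of the constraint matrix in (\ref{Con:Discrete})--(\ref{Con:G_fair}), which ensures that the LP relaxation of the $b$-matching problem is integral and that a greedy sequence of maximum-weight matchings attains its optimum; this closes the gap between round-wise maximisation and global optimisation, and since the outer loop drives $(\eta_k,\xi_k,\eta_E,\xi_E)$ to a fixed point, the conclusion applies to the surrogate rate function obtained at convergence.
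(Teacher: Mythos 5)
Your first two stages (linearization of $\widehat{C}_S$ in (\ref{Eq:Re_construct}) and feasibility of the output of \textbf{Algorithm}~\ref{Alg:Solve_P1}) are sound and match the paper's setup. The crux, as you yourself note, is the third stage, and here the proposal has a genuine gap. The paper's own proof is a three-line induction on the number of edges per target which simply \emph{asserts} the step you flag as the obstacle --- that round-wise maximum-weight matchings compose into a globally optimal assignment ``because the KM algorithm can obtain the optimal matching result in each assignment.'' You correctly diagnosed that this naive induction breaks down, but your repair does not work: total unimodularity of the constraint matrix in (\ref{Con:Discrete})--(\ref{Con:G_fair}) only guarantees that the LP relaxation of the $b$-matching problem has an integral optimal vertex, i.e., that solving the LP (or the transportation problem) directly yields an integral optimum. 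It says nothing about the \emph{sequential} procedure actually executed by \textbf{Algorithm}~\ref{Alg:Solve_P1}, in which each KM call is restricted to one edge per target and matched units are deleted before the next call. ``TU implies greedy peeling by maximum-weight matchings is optimal'' is a non sequitur, so the exchange argument is never actually closed. (A secondary flaw: an arbitrary feasible competitor need not decompose into $\lceil N/(L+1)\rceil$ sub-matchings at all --- a target carrying more than $\lceil N/(L+1)\rceil$ units forces more rounds, since each matching uses at most one edge per target.)

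The gap is not merely technical: greedy peeling can be strictly suboptimal for this problem class. Take $L+1=2$ targets $\{a,b\}$ and $N=4$ units with $w_{1,a}=10$, $w_{2,a}=9$, $w_{1,b}=9$, $w_{2,b}=8$, and all weights of units $3,4$ equal to $0$; constraint (\ref{Con:G_fair}) requires each target to receive at least $\lfloor 4/2\rfloor = 2$ units. The global optimum assigns units $1,2$ to $a$ and units $3,4$ to $b$, with value $19$ --- and it is balanced, so it is feasible even under the stricter equal-split reading. But every first KM round saturates both targets and returns value $18$ (either $\{(1,a),(2,b)\}$ or $\{(2,a),(1,b)\}$ by Lemma~\ref{Le:KMbest}), and the second round adds $0$, so the algorithm outputs $18<19$. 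This shows that no exchange or peeling argument can succeed without additional hypotheses on the weights $w_{n,l}$, and incidentally that the paper's inductive step suffers from the same defect. If you want a provably optimal solution of the linearized problem, solve the $b$-matching/transportation LP directly --- that is exactly what total unimodularity buys you --- rather than iterating KM; as written, your proposal (like the paper's proof) establishes only that each round is locally optimal, not that the rounds compose optimally.
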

\begin{proof}
If each element in $\mathcal{L}\cup\left\{0\right\}$ has up to one matching edge, the proposed algorithm will achieve the optimal result according to Lemma~\ref{Le:KMbest}.
Suppose the optimality ensures when the number of matching edges for an element is $J$, which means that the total matched edges is $J\left(L+1\right)$.
Then, when the number goes to $J+1$, the optimality of the proposed algorithm still holds because the KM algorithm can obtain the optimal matching result in each assignment.
\end{proof}

\textit{2) Complexity analysis:}
According to~(\ref{Eq:Weight}), the calculations of the weighted matrix and direct bias leads to a complexity about $\mathcal{O}\left(NLK\right)$.
In the $t$-th loop, the KM algorithm is processed on a bipartite graph composed of two sets of $L+1$ points and $N-t\left(L+1\right)$ points.
Each loop has a complexity about $\mathcal{O}(N\left(L+1\right)^2-t\left(L+1\right)^3)$~\cite{6882839} and the proposed algorithm conducts till $t=\left\lceil N/\left(L+1\right) \right\rceil$.
Therefore, the computational complexity of \textbf{Algorithm}~\ref{Alg:Solve_P1} is given by
\vspace{-0.2cm}
\begin{align}
	\label{Eq:Complexity}
	&\sum_{t=0}^{\left\lceil N/\left(L+1\right) \right\rceil-1} \left\{N\left(L+1\right)^2-t\left(L+1\right)^3\right\} T \notag\\
	&\qquad = T\left(L+1\right)^2 \left\lceil\frac{N}{L+1}\right\rceil \left\{N-\frac{L+1}{2}\left(\left\lceil\frac{N}{L+1}\right\rceil-1\right)\right\} \notag\\
	&\qquad \approx \frac{TNL\left(N+L\right)}{2},
\end{align}
\vspace{-0.5cm}

\noindent
where $T$ denotes the time of assignments.
Notably, the time consumption is proportional to the quadratic power of $N$ and $L$, which is far lower than the exhaustive search method with the complexity of $\mathcal{O}(\left(L+1\right)^N)$.

\begin{table}[t]
	\centering
	\caption{Simulation Parameters}
	\vspace{-0.2cm}
	\label{parameters}
	\begin{tabular}{| c | c | c | c |}
		\hline
		$K = 4$ & $L = 4$ & $W = 20\ \text{MHz}$ & $\delta = 0.5$\\
		\hline
		$g_{of} = 1$ & $m = 1$ & $A = 4\ \text{cm}^2$ & $\rho_k = 0.5\ A/W$\\
		\hline
		$\Phi = 80^{\circ}$ & $u = 1.5$ & $D = 100\ \text{cm}^2$ & $\sigma^2 = 10^{-10}\ W$\\
		\hline
	\end{tabular}
	\label{Tab:SimuPara}
\end{table}

\vspace{-0.5cm}
\section{Numerical Results}
\vspace{-0.2cm}
\label{Sec:Numerical}
In this section, we provide simulation results to testify the previous theoretical analyses.
Specifically, four LEDs located at (1m, 1m, 3m), (1m, 7m, 3m), (7m, 1m, 3m), and (7m, 7m, 3m) are transmitters in an 8m $\times$ 8m $\times$ 3m room, where users walk randomly in the plane 0.5m above the ground.
For each transmitter, its service probability for a certain user is inversely proportional to the square of LoS propagation distances.
Then, a planar IRS with unit area $D$ is deployed on the wall, and its horizontal and vertical margins are 1m and 0.3m, respectively. 
Moreover, the spacing between two IRS units is set as 20cm, and more detailed parameters are given in Table~\ref{Tab:SimuPara}.

To start with, the numerical simulation is executed to evaluate the performance of the proposed algorithm as well as the correctness of functions approximation in~(\ref{Eq:Rate_appro_intend}) and~(\ref{Eq:Rate_appro_eve}).
Without loss of generality, four users are located at (3.6m, 2.7m, 0m), (1.0m, 3.3m, 0m), (3.0m, 4.5m, 0m), and (6.4m, 2.2m, 0m), where the first one is eavesdropped by Eve located at (2.1m, 1.5m, 0m).
The emission power on four LEDs varies from 0 dBW to 10 dBW, and different benchmark schemes are adopted for comparison:

\textit{1) \textbf{Approximation secrecy rate:}} 
Assign IRS units according to \textbf{Algorithm}~\ref{Alg:Solve_P1} and obtain the approximate result $\widehat{C}_S(\widetilde{\boldsymbol{G}})$.

\textit{2) \textbf{Proposed algorithm with \& without Eve SINR:}} 
Assign IRS units according to \textbf{Algorithm}~\ref{Alg:Solve_P1}, and the secrecy rate functions in~(\ref{Eq:Rate_appro_eve}) are approximated at real SINR and a random point, respectively.
The secrecy rate is calculated by~(\ref{Eq:Secure_rate}).

\textit{3) \textbf{Random assignment:}} 
Assign all IRS units equally among different columns so that each of them has nearly $N/\left(L+1\right)$ ones.
Then, the vaiable $\widetilde{\boldsymbol{G}}$ is scrambled and randomly rearranged according to rows.
For accuracy, the secrecy rate is calculated by averaging 200 independent trials.

\textit{4) \textbf{No IRS:}} Obtain the secrecy rate with $\widetilde{\boldsymbol{G}}=\boldsymbol{0}$.


As shown in Fig.~\ref{Fig:Power}, the proposed approximation method is considerably tight to the capacity formula in~(\ref{Eq:Secure_rate}), and the rate gap becomes large when $N$ increases from 8 to 64.
This is because the step in~(\ref{Eq:Rate_appro_intend_simplify}) and~(\ref{Eq:Rate_appro_eve}) generate non-negligible errors with the increase of $N$.
Notably, the VLC system security benefits a lot from the deployed IRS. 
When the emission power is 10 dBW, the surface with $N=64$ units can achieve nearly 30 Mbps gain compared to the case without IRS, and even the random assignment scheme improves the secrecy rate by 15 Mbps.
Nevertheless, the rate gain obtained by IRS is much smaller when $N=8$, i.e., less than 8 Mbps at 10 dBW power, which reveals that the secrecy improvement performance is sensitive to the number of IRS units.
Moreover, the results also show that the proposed algorithm has less SINR requirement for the eavesdropper, namely the tangent point of $\eta_{E}$ and $\xi_{E}$ will not significantly affect the secrecy rate.
This can be explained by the closeness between the approximate function and the original rate function in Lemma~\ref{Le:Appro}.

\begin{figure}[t]
	\centering
	\includegraphics[width=0.55\textwidth]{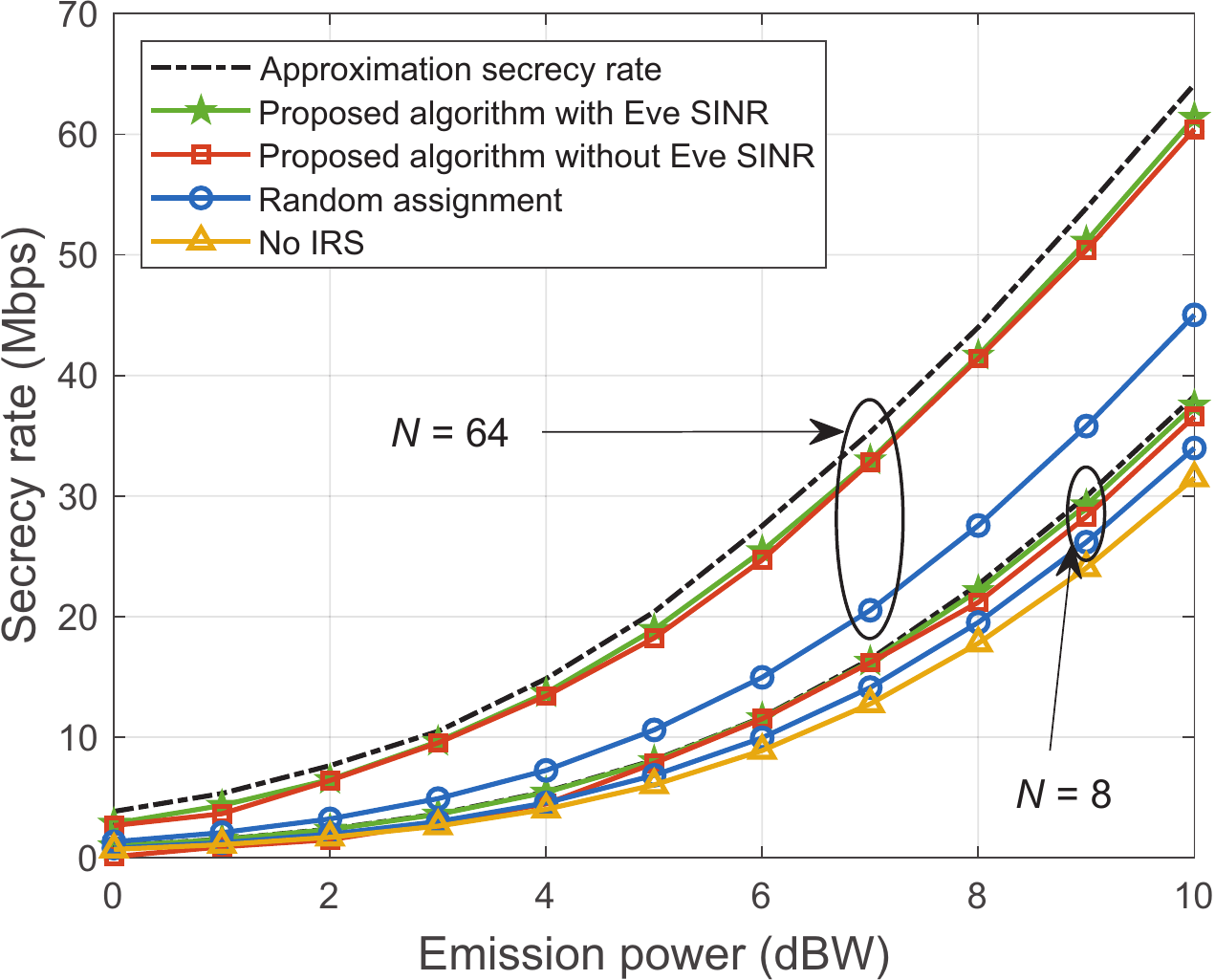}
	\caption{The performance of proposed algorithm compared with other baselines.} 
	\label{Fig:Power}
\end{figure} 
\begin{figure}[t]
	\centering
	\includegraphics[width=0.55\textwidth]{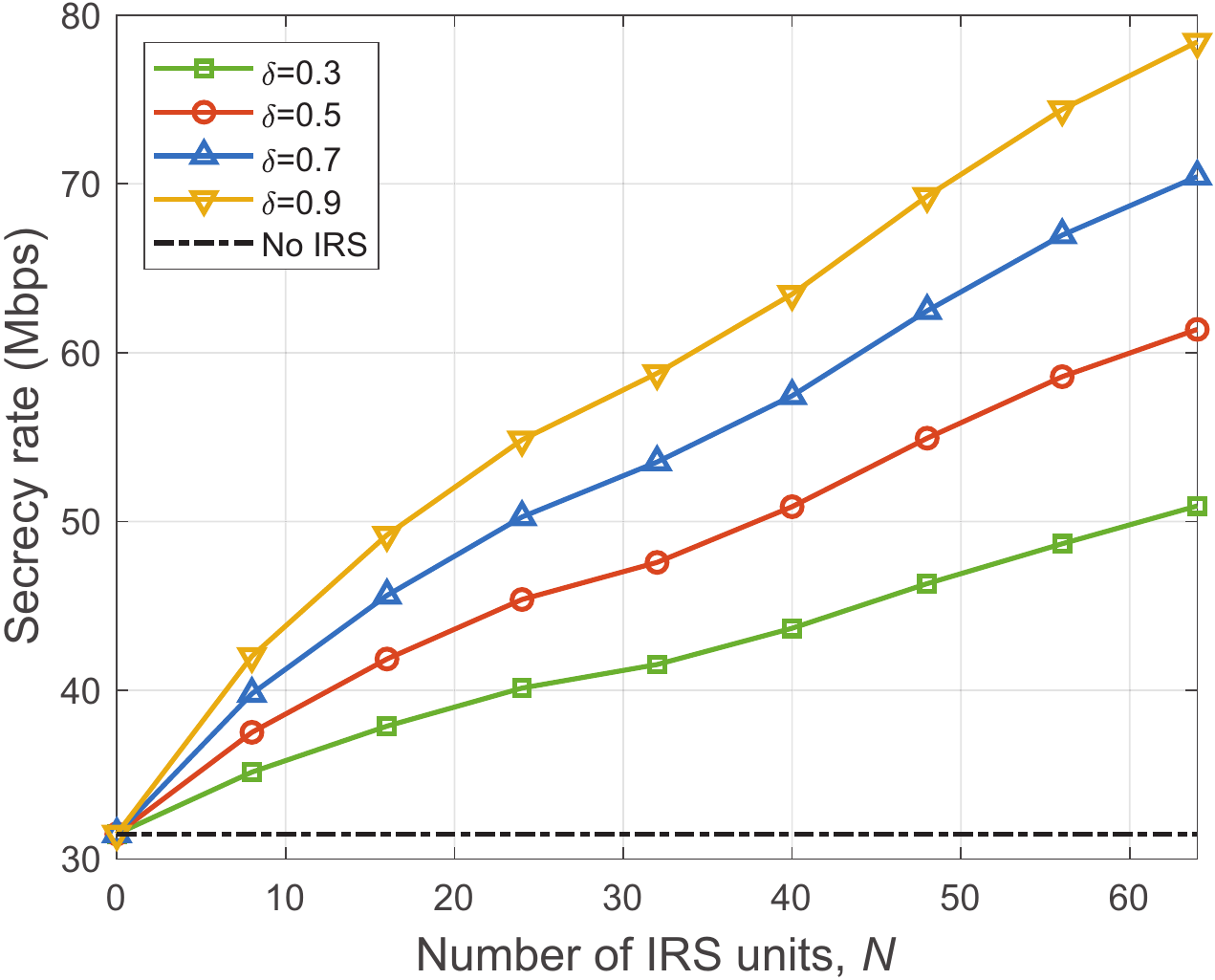}
	\caption{The secrecy rate versus the number of IRS units under the different value of reflectance.} 
	\label{Fig:App}
\end{figure}

Numerical simulations are carried out in Fig.~\ref{Fig:App} to investigate the influence of reflectance as well as the number of IRS units.
The emission power herein is 10 dBW, and $N$ increases from 0 to 64 under certain reflectance values.
It can be seen from the result that there is a positive correlation between the overall secrecy rate and $N$, e.g., the rate when $\delta=0.5$ and $N=64$ is doubled compared to no IRS case.
Besides, the numerical results show that a higher secrecy rate can also be achieved with large $\delta$, i.e., the case with $\delta=0.7$ and $N=8$ and the one with $\delta=0.3$ and $N=24$ both correspond to the overall secrecy rate of 40 Mbps.
This phenomenon suggests that the secrecy rate can be increased by cooperatively determining the values of reflectance and the number of IRS units.

\vspace{-0.3cm}
\section{Conclusions}
\vspace{-0.2cm}
\label{Sec:Conclu}
An indoor IRS-aided secure VLC system is modeled in this letter, wherein one of the legitimate users is eavesdropped.
Through the approximation of the objective function, the secrecy rate maximization problem is transformed into an assignment problem, and an iterative KM algorithm with second-order polynomial complexity is proposed to search for the optimal result.
Numerical results show that the secrecy rate has been prominently improved by IRS, and the rate gain is nearly proportional to the number of units and the reflectance value.
Moreover, the usage of IRS offers a promising research direction for VLC physical layer security.

\vspace{-0.3cm}
\def\bibfont{\fontsize{7.8}{9.3}\selectfont}
\bibliographystyle{abbrv}
\bibliography{IEEEabrv,reference}

\begin{thebibliography}{10}
\providecommand{\url}[1]{#1}
\csname url@samestyle\endcsname
\providecommand{\newblock}{\relax}
\providecommand{\bibinfo}[2]{#2}
\providecommand{\BIBentrySTDinterwordspacing}{\spaceskip=0pt\relax}
\providecommand{\BIBentryALTinterwordstretchfactor}{4}
\providecommand{\BIBentryALTinterwordspacing}{\spaceskip=\fontdimen2\font plus
\BIBentryALTinterwordstretchfactor\fontdimen3\font minus
  \fontdimen4\font\relax}
\providecommand{\BIBforeignlanguage}[2]{{%
\expandafter\ifx\csname l@#1\endcsname\relax
\typeout{** WARNING: IEEEtran.bst: No hyphenation pattern has been}%
\typeout{** loaded for the language `#1'. Using the pattern for}%
\typeout{** the default language instead.}%
\else
\language=\csname l@#1\endcsname
\fi
#2}}
\providecommand{\BIBdecl}{\relax}
\BIBdecl

\bibitem{pathak2015visible}
P.~H. Pathak, X.~Feng, P.~Hu, and P.~Mohapatra, ``{Visible light communication,
  networking, and sensing: A survey, potential and challenges},'' \emph{{IEEE}
  Commun. Surveys Tuts.}, vol.~17, no.~4, pp. 2047--2077, Sep. 2015.

\bibitem{9317699}
N.~Su, E.~Panayirci, M.~Koca, A.~Yesilkaya, H.~V. Poor, and H.~Haas,
  ``{Physical layer security for multi-user MIMO visible light communication
  systems with generalized space shift keying},'' \emph{IEEE Trans. Commun.},
  vol.~69, no.~4, pp. 2585--2598, Jan. 2021.

\bibitem{mostafa2015physical}
A.~Mostafa and L.~Lampe, ``{Physical-layer security for MISO visible light
  communication channels},'' \emph{{IEEE} J. Sel. Areas Commun.}, vol.~33,
  no.~9, pp. 1806--1818, May 2015.

\bibitem{shirmanesh2020electro}
G.~K. Shirmanesh, R.~Sokhoyan, P.~C. Wu, and H.~A. Atwater,
  ``{Electro-optically tunable multifunctional metasurfaces},'' \emph{ACS
  nano}, vol.~14, no.~6, pp. 6912--6920, Jun. 2020.

\bibitem{najafi2019intelligent}
M.~Najafi and R.~Schober, ``{Intelligent reflecting surfaces for free space
  optical communications},'' in \emph{IEEE Global Telecommunications Conference
  (GLOBECOM)}, Waikoloa, HI, Dec. 2019.

\bibitem{2021intelligent}
M.~Najafi, B.~Schmauss, and R.~Schober, ``{Intelligent reflecting surfaces for
  free space optical communication systems},'' \emph{IEEE Trans. Commun.},
  vol.~69, no.~9, pp. 6134--6151, May 2021.

\bibitem{abdelhady2020visible}
A.~M. Abdelhady, A.~K.~S. Salem, O.~Amin, B.~Shihada, and M.-S. Alouini,
  ``{Visible light communications via intelligent reflecting surfaces:
  Metasurfaces vs mirror arrays},'' \emph{{IEEE Open J. Commun. Soc.}}, vol.~2,
  pp. 1--20, Dec. 2020.

\bibitem{wang2020performance}
H.~Wang, Z.~Zhang, B.~Zhu, J.~Dang, L.~Wu, L.~Wang, K.~Zhang, and Y.~Zhang,
  ``{Performance of wireless optical communication with reconfigurable
  intelligent surfaces and random obstacles},'' \emph{arXiv preprint
  arXiv:2001.05715}, Jan. 2020.

\bibitem{9543660}
S.~Aboagye, T.~M. Ngatched, O.~A. Dobre, and A.~R. Ndjiongue, ``{Intelligent
  reflecting surface-aided indoor visible light communication systems},''
  \emph{{IEEE} Commun. Lett.}, pp. 1--1, Sep. 2021,
  doi:\href{https://dx.doi.org/10.1109/LCOMM.2021.3114594}
  {10.1109/LCOMM.2021.3114594}.

\bibitem{sun_CL}
S.~Sun, F.~Yang, and J.~Song, ``{Sum rate maximization for intelligent
  reflecting surface-aided visible light communications},'' \emph{{IEEE}
  Commun. Lett.}, vol.~25, no.~11, pp. 3619--3623, Aug. 2021.

\bibitem{qian2021secure}
L.~Qian, X.~Chi, L.~Zhao, and A.~Chaaban, ``{Secure visible light
  communications via intelligent reflecting surfaces},'' \emph{arXiv preprint
  arXiv:2101.12390}, Jan. 2021.

\bibitem{tang2020wireless}
W.~Tang, M.~Z. Chen, X.~Chen, J.~Y. Dai, Y.~Han, M.~Di~Renzo, Y.~Zeng, S.~Jin,
  Q.~Cheng, and T.~J. Cui, ``{Wireless communications with reconfigurable
  intelligent surface: Path loss modeling and experimental measurement},''
  \emph{{IEEE} Trans. Wireless Commun.}, vol.~20, no.~1, pp. 421--439, Sep.
  2020.

\bibitem{wang2013tight}
J.-B. Wang, Q.-S. Hu, J.~Wang, M.~Chen, and J.-Y. Wang, ``{Tight bounds on
  channel capacity for dimmable visible light communications},'' \emph{J.
  Lightw. Technol.}, vol.~31, no.~23, pp. 3771--3779, Oct. 2013.

\bibitem{papandriopoulos2008optimal}
J.~Papandriopoulos, S.~Dey, and J.~Evans, ``{Optimal and distributed protocols
  for cross-layer design of physical and transport layers in MANETs},''
  \emph{{IEEE/ACM} Trans. Netw.}, vol.~16, no.~6, pp. 1392--1405, Feb. 2008.

\bibitem{6882839}
L.~Wang and H.~Wu, ``{Fast pairing of device-to-device link underlay for
  spectrum sharing with cellular users},'' \emph{{IEEE} Commun. Lett.},
  vol.~18, no.~10, pp. 1803--1806, Aug. 2014.

\end{thebibliography}
\vspace{-0.4cm}
\end{document}